\newtheorem{thm}{Theorem}[section]
\newtheorem{cor}[thm]{Corollary}
\newtheorem{lem}[thm]{Lemma}
\newtheorem{rem}[thm]{Remark}
\newtheorem{defn}[thm]{Definition}
\numberwithin{equation}{section}
\date{}
\begin{document}

\author[Tulkin H. Rasulov, Elyor B. Dilmurodov]
{Tulkin H. Rasulov, Elyor B. Dilmurodov}
\title[Threshold analysis for a family of $2 \times 2$ operator matrices]
{Threshold analysis for a family of $2 \times 2$ operator matrices}

\maketitle
\begin{center}
{\small Department of Mathematics\\
Faculty of Physics and Mathematics\\
Bukhara State University\\
M. Ikbol str. 11, 200100 Bukhara, Uzbekistan\\
E-mail: rth@mail.ru, elyor.dilmurodov@mail.ru}
\end{center}

\begin{abstract}
We consider a family of $2 \times 2$ operator matrices ${\mathcal
A}_\mu(k),$ $k \in {\Bbb T}^3:=(-\pi, \pi]^3,$ $\mu>0$, acting in
the direct sum of zero- and one-particle subspaces of a Fock space.
It is associated with the Hamiltonian of a system consisting of at
most two particles on a three-dimensional lattice ${\Bbb Z}^3,$
interacting via annihilation and creation operators. We find a set
$\Lambda:=\{k^{(1)},...,k^{(8)}\} \subset {\Bbb T}^3$ and a
critical value of the coupling constant $\mu$ to establish
necessary and sufficient conditions for either
$z=0=\min\limits_{k\in {\Bbb T}^3} \sigma_{\rm ess}({\mathcal
A}_\mu(k))$ ( or $z=27/2=\max\limits_{k\in {\Bbb T}^3} \sigma_{\rm
ess}({\mathcal A}_\mu(k))$ is a threshold eigenvalue or a virtual
level of ${\mathcal A}_\mu(k^{(i)})$ for some $k^{(i)} \in
\Lambda.$
\end{abstract}

\medskip {AMS subject Classifications:} Primary 81Q10; Secondary
35P20, 47N50.

\textbf{Key words and phrases:} operator matrices, Hamiltonian,
generalized Friedrichs model, zero- and one-particle subspaces of a
Fock space, threshold eigenvalues, virtual levels, annihilation
and creation operators.

\section{\bf Introduction}

Operator matrices are matrices where the entries are linear operators between Banach or
Hilbert spaces, see \cite{CT2008}. One special class of operator matrices are Hamiltonians associated with
the systems of non-conserved number of quasi-particles on a lattice.
In such systems the number of particles can be unbounded as in the case of spin-boson models \cite{HSp95, HS89} or bounded as in the case of "truncated" $ $ spin-boson models \cite{MinSp96, MNR15, OI2018, TR16}.
They arise, for example, in the theory of solid-state physics \cite{Mog91}, quantum field theory \cite{Frid65} and statistical physics \cite{MalMin95, MinSp96}.

The study of systems describing $n$ particles in interaction, without conservation
of the number of particles is reduced to the investigation of the spectral properties of self-adjoint operators acting in the {\it cut subspace} ${\mathcal H}^{(n)}$ of the Fock space, consisting of $r \leq n$ particles \cite{Frid65, MalMin95, MinSp96}.
The perturbation of an operator (the generalized Friedrichs model which has a $2 \times 2$ operator matrix form acting in ${\mathcal H}^{(2)}$), with discrete and essential spectrum has played a considerable role in the study of spectral problems connected
with the quantum theory of fields \cite{Frid65}.

One of the most actively studied objects in operator theory, in
many problems of mathematical physics and other related fields is
the investigation of the threshold eigenvalues and virtual levels
of block operator matrices, in particular, Hamiltonians on a Fock
space associated with systems of non-conserved number of
quasi-particles on a lattice. In the present paper we consider a
family of $2 \times 2$ operator matrices ${\mathcal A}_\mu(k),$ $k
\in {\Bbb T}^3:=(-\pi, \pi]^3,$ $\mu>0$ (so - called generalized
Friedrichs models) associated with the Hamiltonian of a system
consisting of at most two particles on a three-dimensional lattice
${\Bbb Z}^3,$ interacting via creation and annihilation operators.
They are acting in the direct sum of zero-particle and
one-particle subspaces of a Fock space. The main goal of the paper
is to give a thorough mathematical treatment of the spectral
properties of this family in dimension three. More exactly, we
find a set $\Lambda:=\{k^{(1)},...,k^{(8)}\} \subset {\Bbb T}^3$
and prove that for a $i \in \{1,2, \ldots , 8\}$ there is a value
$\mu_i$ of the parameter $\mu$ such that only for $\mu=\mu_i$ the
operator ${\mathcal A}_\mu(\bar{0})$ has a zero-energy resonance,
here $0=\min\sigma_{\rm ess}({\mathcal A}_\mu(\bar{0}))$ and the
operator ${\mathcal A}_\mu(k^{(i)})$ has a virtual level at the
point $z=27/2=\max\sigma_{\rm ess}({\mathcal A}_\mu(k^{(i)}))$,
where $\bar{0}:=(0,0,0)\in {\Bbb T}^3$ and $k^{(i)} \in \Lambda.$
We point out that a part of the results is typical for lattice
models; in fact, they do not have analogues in the continuous case
(because its essential spectrum is half-line $[E; +\infty)$, see
for example \cite{MinSp96}).

 We notice that threshold eigenvalue and virtual level
(threshold energy resonance) of a generalized Friedrichs model
have been studied in \cite{ALR07, MR14, TR11, RD2019}.
The paper \cite{ALM07} is devoted to the threshold analysis
for a family of Friedrichs models under rank one perturbations.
In \cite{ALMM06} a wide class of two-body energy operators $h(k)$
on the $d$-dimensional lattice ${\Bbb Z}^{\rm d}$, ${\rm d} \geq 3,$
is considered, where $k$ is the two-particle quasi-momentum.
If the two-particle Hamiltonian $h(0)$ has either an eigenvalue or a
virtual level at the bottom of its essential spectrum and the
one-particle free Hamiltonians in the coordinate representation
generate positivity preserving semi-groups, then it is shown that
for all nontrivial values $k$, $k \neq 0$, the discrete spectrum of
$h(k)$ below its threshold is non-empty.
These results have been applied to the proof of the existence of
Efimov's effect and to obtain discrete spectrum asymptotics
of the corresponding Hamiltonians.
We note that above mentioned results are discussed only for the
bottom of the essential spectrum. The threshold eigenvalues and
virtual levels of a slightly simpler version of ${\mathcal
A}_\mu(k)$ were investigated in \cite{RD14}, and the
structure of the numerical range are
studied using similar results. In \cite{RT2019}, the essential spectrum of
the family of $3 \times 3$ operator matrices $H(K)$ is described by the spectrum of the family of
$2\times 2$ operator matrices. The results of the present paper are play important role in the
investigations of the operator $H(K)$, see \cite{MR14}.

The plan of this paper is as follows: Section 1 is an introduction
to the whole work. In Section 2, a family of $2 \times 2$ operator matrices are described as bounded self-adjoint
operators in the direct sum of two Hilbert spaces and its spectrum is described.
In Section 3, we discuss some results
concerning threshold analysis of a family of $2 \times 2$ operator matrices.

We adopt the following conventions throughout the present paper.
Let ${\Bbb N},$ ${\Bbb Z},$ ${\Bbb R}$ and ${\Bbb C}$ be the set
of all positive integers, integers, real and complex numbers,
respectively. We denote by ${\Bbb T}^3$ the three-dimensional
torus (the first Brillouin zone, i.e., dual group of ${\Bbb
Z}^3$), the cube $(-\pi,\pi]^3$ with appropriately identified
sides equipped with its Haar measure. The torus ${\Bbb T}^3$ will
always be considered as an abelian group with respect to the
addition and multiplication by real numbers regarded as operations
on the three-dimensional space ${\Bbb R}^3$ modulo $(2 \pi {\Bbb
Z})^3.$

Denote by $\sigma(\cdot),$ $\sigma_{\rm ess}(\cdot)$ and
$\sigma_{\rm disc}(\cdot),$ respectively, the spectrum, the
essential spectrum, and the discrete spectrum of a bounded
self-adjoint operator.

\section{\bf Family of $2\times2$ operator matrices and its spectrum}

Let $L_2({\Bbb T}^3)$ be the Hilbert space of square-integrable
(complex-valued) functions defined on the three-dimensional torus
${\Bbb T}^3$. Denote ${\mathcal H}$ by the direct sum of spaces
${\mathcal H}_0:={\Bbb C}$ and ${\mathcal H}_1:=L_2({\Bbb T}^3)$,
that is, ${\mathcal H}:={\mathcal H}_0 \oplus {\mathcal H}_1$. We
write the elements $f$ of the space ${\mathcal H}$ in the form
$f=(f_0, f_1)$ with $f_0 \in {\mathcal H}_0$ and $f_1 \in
{\mathcal H}_1.$ Then for any two elements $f=(f_0, f_1)$ and
$g=(g_0, g_1)$, their scalar product is defined by
$$
(f,g):=f_0 \overline{g_0}+\int_{{\Bbb T}^3} f_1(t) \overline{g_1(t)} dt.
$$
The Hilbert spaces ${\mathcal H}_0$ and ${\mathcal H}_1$
are zero- and one-particle
subspaces of a Fock space ${\mathcal F}(L_2({\Bbb
T}^3))$ over $L_2({\Bbb T}^3),$ respectively, where
$$
{\mathcal F}(L_2({\Bbb T}^3)):={\Bbb C} \oplus L_2({\Bbb T}^3) \oplus L_2(({\Bbb T}^3)^2) \oplus \cdots \oplus L_2(({\Bbb T}^3)^n)
\oplus \cdots.
$$

In the Hilbert space ${\mathcal H}$ we
consider the following family of $2\times2$ operator matrices
$$
{\mathcal A}_\mu(k):=\left( \begin{array}{cc}
A_{00}(k) & \mu A_{01}\\
\mu A_{01}^* & A_{11}(k)\\
\end{array}
\right),
$$
where $A_{ii}(k): {\mathcal H}_i\to {\mathcal H}_i,$ $i=0,1,$ $k\in {\Bbb T}^3$ and
$A_{01}: {\mathcal H}_1 \to {\mathcal H}_0$ are defined by the rules
$$
A_{00}(k)f_0=w_0(k)f_0,\quad A_{01}f_1=\int_{{\Bbb T}^3} v(t)f_1(t)dt, \quad
(A_{11}(k)f_1)(p)=w_1(k,p)f_1(p).
$$
Here $f_i \in {\mathcal H}_i,$ $i=0,1;$
$\mu>0$ is a coupling constant, the function $v(\cdot)$ is a real-valued analytic function on ${\Bbb T}^3$, the functions
$w_0(\cdot)$ and $w_1(\cdot, \cdot)$ have the form
$$
w_0(k):=\varepsilon(k)+\gamma, \quad
w_1(k,p):=\varepsilon(k)+\varepsilon(k+p)+\varepsilon(p)
$$
with $\gamma \in {\Bbb R}$ and the dispersion function $\varepsilon(\cdot)$
is defined by
\begin{equation}\label{epsilon}
\varepsilon(k):=\sum_{i=1}^3 (1-\cos \, k_i),\,k=(k_1, k_2, k_3) \in
{\Bbb T}^3.
\end{equation}
Under these assumptions the operator matrix ${\mathcal A}_\mu(k)$ is a bounded and
self-adjoint in ${\mathcal H}$.

We remark that the operators $A_{01}$ and $A_{01}^*$ are called
annihilation and creation operators \cite{Frid65}, respectively. In
physics, an annihilation operator is an operator that lowers the
number of particles in a given state by one, a creation operator is
an operator that increases the number of particles in a given state
by one, and it is the adjoint of the annihilation operator.

Let ${\mathcal A}_0(k):={\mathcal A}_\mu(k)\vert_{\mu=0}$. The perturbation ${\mathcal A}_\mu(k)-{\mathcal A}_0(k)$ of the operator ${\mathcal A}_0(k)$ is a
self-adjoint operator of rank 2. Therefore, in accordance with the
invariance of the essential spectrum under the finite rank
perturbations \cite{RS4}, the essential spectrum $\sigma_{\rm ess}({\mathcal A}_\mu(k))$
of ${\mathcal A}_\mu(k)$ fills the following interval on the real axis
$$
\sigma_{\rm ess}({\mathcal A}_\mu(k))=[m(k), M(k)],
$$
where the numbers $m(k)$ and $M(k)$ are defined by
\begin{equation}\label{m(p) and M(p)}
m(k):=\min\limits_{p\in {\Bbb T}^3} w_1(k,p), \quad M(k):=
\max\limits_{p\in {\Bbb T}^3} w_1(k,p).
\end{equation}

\begin{rem}
We remark that the essential spectrum of ${\mathcal A}_\mu(\bar{\pi})$, $\bar{\pi}:=(\pi,\pi,\pi) \in {\Bbb T}^3$ is degenerate to the set consisting of the unique point $\{12\}$ and hence we can not state that the essential spectrum of ${\mathcal A}_\mu(k)$ is absolutely continuous for any $k\in {\Bbb T}^3$.
\end{rem}

For any $\mu>0$ and $k\in {\Bbb T}^3$ we define an analytic function
$\Delta_\mu(k\,; \cdot)$ in ${\Bbb C} \setminus
\sigma_{\rm ess}({\mathcal A}_\mu(k))$ by
\begin{equation*}
\Delta_\mu(k\,; z):=w_0(k)-z-\mu^2 \int_{{\Bbb T}^3} \frac{v^2(t)dt}{w_1(k,t)-z},\,\, z\in {{\Bbb C} \setminus
\sigma_{\rm ess}({\mathcal A}_\mu(k))}.
\end{equation*}

Usually the function $\Delta_\mu(k\,; \cdot)$ is called the
Fredholm determinant associated to the operator matrix ${\mathcal
A}_\mu(k)$.

The following statement establishes connection
between the eigenvalues of the operator  ${\mathcal A}_\mu(k)$ and
zeros of the function $\Delta_\mu(k\,; \cdot)$, see \cite{ALR07, TR11}.

\begin{lem}\label{Lemma 2.1.} For any $\mu>0$ and $k \in {\Bbb T}^3$ the operator ${\mathcal A}_\mu(k)$
has an eigenvalue $z_\mu(k) \in {\Bbb C} \setminus \sigma_{\rm ess}({\mathcal A}_\mu(k))$
if and only if $\Delta_\mu(k\,; z_\mu(k))=0$.
\end{lem}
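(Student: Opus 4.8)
The plan is to establish the claimed equivalence by the standard Schur complement / Feshbach argument adapted to the $2\times 2$ operator matrix structure. Suppose first that $z=z_\mu(k)\notin\sigma_{\rm ess}(\mathcal{A}_\mu(k))$ is an eigenvalue of $\mathcal{A}_\mu(k)$ with eigenvector $f=(f_0,f_1)\in\mathcal{H}_0\oplus\mathcal{H}_1$, $f\neq 0$. Writing out the system $\mathcal{A}_\mu(k)f=zf$ componentwise gives
\begin{equation*}
w_0(k)f_0+\mu\int_{{\Bbb T}^3}v(t)f_1(t)\,dt=zf_0,\qquad \mu v(p)f_0+w_1(k,p)f_1(p)=zf_1(p).
\end{equation*}
Since $z\notin\sigma_{\rm ess}(\mathcal{A}_\mu(k))=[m(k),M(k)]$, the multiplication operator $A_{11}(k)-z$ is boundedly invertible on $\mathcal{H}_1$, so the second equation yields $f_1(p)=-\mu v(p)f_0/(w_1(k,p)-z)$. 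Substituting into the first equation gives $\bigl(w_0(k)-z\bigr)f_0-\mu^2 f_0\int_{{\Bbb T}^3}v^2(t)/(w_1(k,t)-z)\,dt=0$, i.e. $\Delta_\mu(k;z)f_0=0$. It remains to observe that $f_0\neq 0$: if $f_0=0$, then $f_1(p)=0$ for a.e.\ $p$ from the displayed formula, contradicting $f\neq 0$. Hence $\Delta_\mu(k;z)=0$.

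Conversely, suppose $\Delta_\mu(k;z)=0$ for some $z\notin\sigma_{\rm ess}(\mathcal{A}_\mu(k))$. Define $f_0:=1$ and $f_1(p):=-\mu v(p)/(w_1(k,p)-z)$; since $v$ is bounded (being continuous on the compact torus) and $w_1(k,\cdot)-z$ is bounded away from zero, $f_1\in L_2({\Bbb T}^3)$, so $f=(f_0,f_1)\in\mathcal{H}$ is a nonzero element. Reversing the computation above, the second component of $\mathcal{A}_\mu(k)f-zf$ vanishes by construction, and the first component equals $w_0(k)-z-\mu^2\int v^2(t)/(w_1(k,t)-z)\,dt=\Delta_\mu(k;z)=0$. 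Thus $\mathcal{A}_\mu(k)f=zf$ and $z$ is an eigenvalue of $\mathcal{A}_\mu(k)$.

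There is no serious obstacle here; the only point requiring a word of care is the verification that $f_1\in L_2({\Bbb T}^3)$ and, in the forward direction, that the eigenvector cannot have vanishing first coordinate — both follow immediately from $z$ lying strictly outside $[m(k),M(k)]$, which makes $w_1(k,\cdot)-z$ bounded below in modulus, together with the analyticity (hence boundedness) of $v$ on ${\Bbb T}^3$. One may also remark that this is precisely the content of the Schur complement being invertible: $z\notin\sigma(\mathcal{A}_\mu(k))$ outside the essential spectrum exactly when the scalar Schur complement $\Delta_\mu(k;z)$ is nonzero, and the reference to \cite{ALR07, TR11} covers the routine details.
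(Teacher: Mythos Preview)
Your argument is correct and is precisely the standard Schur complement / Feshbach reduction one expects here. Note that the paper itself does not supply a proof of this lemma but merely cites \cite{ALR07, TR11}; your write-up is the routine verification those references contain, so there is nothing to compare against.
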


From Lemma \ref{Lemma 2.1.} it follows that
$$
\sigma_{\rm disc}({\mathcal A}_\mu(k))=\{z\in{\Bbb C} \setminus
\sigma_{\rm ess}({\mathcal A}_\mu(k)):\,\Delta_\mu(k\,; z)=0\}.
$$

Since the function $\Delta_\mu(k\,; \cdot)$ is a monotonically decreasing function on $(-\infty; m(k))$ and $(M(k); +\infty)$,
for $\mu>0$ and $k \in {\Bbb T}^3$ the operator ${\mathcal A}_\mu(k)$ has no more than 1 simple eigenvalue in $(-\infty; m(k))$ and $(M(k); +\infty)$.

Let $\Lambda:=\{k=(k_1, k_2, k_3): k_i \in \{-2\pi/3, 2\pi/3 \},
i=1,2,3 \}.$ Since the set $\Lambda \subset {\Bbb T}^3$ consists 8
points for a convenience we rewrite the set $\Lambda$ as
$\Lambda=\{k^{(1)}, k^{(2)},\ldots, k^{(8)}\}$.

It is easy to verify that the function $w_1(\cdot,\cdot)$ has a
non-degenerate minimum at the point $(\bar{0},\bar{0}) \in ({\Bbb
T}^3)^2,$ $\bar{0}:=(0,0,0)$ and has non-degenerate maximum at the
points of the form $(k^{(i)}, k^{(i)}) \in ({\Bbb T}^3)^2$,
$i=1,\ldots,8$, such that
$$
\min\limits_{k,p\in {\Bbb T}^3}w_1(k,p)=w_1(\bar{0},\bar{0})=0,\quad
\max\limits_{k,p\in {\Bbb T}^3} w_1(k,p)=w_2(k^{(i)}, k^{(i)})=27/2, \quad i=1,\ldots,8.
$$

Simple calculations show that
\begin{align*}
& \sigma_{\rm ess}({\mathcal A}_\mu(\bar{0}))=[0; 12];\\
& \sigma_{\rm ess}({\mathcal A}_\mu(k^{(i)}))=[\frac{15}{2}; \frac{27}{2}], \quad i=1,\ldots,8.
\end{align*}

Therefore,
$$
\min\limits_{k \in {\Bbb T}^3}\sigma_{\rm ess}({\mathcal A}_\mu(k))=0,\quad
\max\limits_{k \in {\Bbb T}^3}\sigma_{\rm ess}({\mathcal A}_\mu(k))=\frac{27}{2}.
$$

\section{\bf Threshold eigenvalues and virtual levels.}

In this Section we prove that for any $i \in \{1,\ldots,8\}$ there
is a value $\mu_i$ of the parameter (coupling constant) $\mu$ such
that only for $\mu=\mu_i$ the operator ${\mathcal A}_\mu(\bar{0})$
has a virtual level at the point $z=0$ (zero-energy resonance) and
the operator ${\mathcal A}_\mu(k^{(i)})$ has a virtual level at
the point $z=27/2$ under the assumption that $v(\bar{0}) \neq 0$
and $v(k^{(i)}) \neq 0$. For the case $v(\bar{0})=0$ and
$v(k^{(i)})=0$ we show that the number $z=0$ ($z=27/2$) is a
threshold eigenvalue of ${\mathcal A}_\mu(\bar{0})$ (${\mathcal
A}_\mu(k^{(i)})$).

Denote by $C({\Bbb T}^3)$ and $L_1({\Bbb T}^3)$ the Banach spaces of continuous and integrable functions on ${\Bbb T}^3$, respectively.

\begin{defn}\label{Definition 3.1.}
Let $\gamma \neq 0.$ The operator ${\mathcal A}_\mu(\bar{0})$ is said to have a virtual level at $z=0$
$($or zero-energy resonance$)$, if the number $1$
is an eigenvalue of the integral operator
$$
(G_{\mu}\psi)(p)=\frac{\mu^2 v(p)}{2\gamma} \int_{{\Bbb T}^3} \frac{v(t)\psi(t)dt}{\varepsilon(t)},\quad  \psi\in C({\Bbb T}^3)
$$
and the associated eigenfunction $\psi(\cdot)$ $($up to constant factor$)$ satisfies the condition $\psi(\bar{0})\neq 0.$
\end{defn}

\begin{defn}\label{Definition 3.2.}
Let $\gamma \neq 9$ and $i \in \{1,\ldots,8\}.$ The operator ${\mathcal A}_\mu(k^{(i)})$ is said to have a virtual level at $z=27/2$, if the number $1$ is an eigenvalue of the integral operator
$$
(G^{(i)}_{\mu}\varphi)(p)=\frac{\mu^2 v(p)}{\gamma-9} \int_{{\Bbb T}^3} \frac{v(t)\varphi(t)dt}
{\varepsilon(k^{(i)}+t)+\varepsilon(t)-9},\quad  \varphi\in C({\Bbb T}^3)
$$
and the associated eigenfunction $\varphi(\cdot)$ $($up to constant factor$)$
satisfies the condition $\varphi(k^{(i)})\neq 0.$
\end{defn}

Using the extremal properties of the function $\varepsilon(\cdot)$, and the Lebesgue dominated convergence theorem
we obtain that there exist the positive finite limits
\begin{align*}
& \lim\limits_{z\to-0}\int_{{\Bbb T}^3} \frac{v^2(t)dt}{\varepsilon(t)-z}=\int_{{\Bbb T}^3} \frac{v^2(t)dt}{\varepsilon(t)};\\
& \lim\limits_{z\to 9+0}\int_{{\Bbb T}^3} \frac{v^2(t)dt}{z-\varepsilon(k^{(i)}+t)-\varepsilon(t)}=\int_{{\Bbb T}^3} \frac{v^2(t)dt}{9-\varepsilon(k^{(i)}+t)-\varepsilon(t)}.
\end{align*}

For the next investigations we define the following quantities
\begin{align*}
& \mu_l(\gamma):=\sqrt{2 \gamma} \left(\int_{{\Bbb T}^3} \frac{v^2(t)dt}{\varepsilon(t)} \right)^{-1/2}\,\, \mbox{for}\,\, \gamma>0;\\
& \mu_r^{(i)}(\gamma):=\sqrt{9-\gamma} \left(\int_{{\Bbb T}^3} \frac{v^2(t)dt}{9-\varepsilon(k^{(i)}+t)-\varepsilon(t)} \right)^{-1/2}\,\, \mbox{for}\,\, \gamma<9,\,\, i=1,\ldots,8.
\end{align*}

Let $\gamma_i \in (0; 9)$ be an unique solution of $\mu_l(\gamma)=\mu_r^{(i)}(\gamma).$ It follows immediately that
$$
\gamma_i:=9 \left(2\int_{{\Bbb T}^3} \frac{v^2(t)dt}{9-\varepsilon(k^{(i)}+t)-\varepsilon(t)}
+\int_{{\Bbb T}^3} \frac{v^2(t)dt}{\varepsilon(t)} \right)^{-1}\, \int_{{\Bbb T}^3} \frac{v^2(t)dt}{\varepsilon(t)}.
$$

In the following we compare the values of $\mu_l (\gamma)$ and
$\mu_r (\gamma)$ depending on $\gamma \in(0;9)$.
\begin{rem}
Let $i \in \{1,\ldots,8\}$.
By the definition of the quantities $\mu_l(\gamma)$ and $\mu_r^{(i)}(\gamma)$ one can conclude that\\
if $\gamma\in (0; \gamma_i),$ then $\mu_l(\gamma)<\mu_r^{(i)}(\gamma);$\\
if $\gamma=\gamma_i$, then $\mu_l(\gamma)=\mu_r^{(i)}(\gamma);$\\
if $\gamma\in (\gamma_i; 9),$ then $\mu_l(\gamma)>\mu_r^{(i)}(\gamma).$
\end{rem}

From the Definition \ref{Definition 3.1.} (resp. \ref{Definition
3.2.}) we obtain that the number $1$ is an eigenvalue of $G_{\mu}$
$($resp. $G^{(i)}_{\mu})$ if and only if $\mu=\mu_l(\gamma)$
$($resp. $\mu=\mu_r^{(i)}(\gamma))$.

We notice that in the Definition \ref{Definition 3.2.}, the requirement of the presence of an eigenvalue $1$ of $G^{(i)}_{\mu}$ corresponds to the existence
of a solution of the equation ${\mathcal A}_\mu(k^{(i)})f=(27/2) f$
and the condition $\psi(k^{(i)}) \neq 0$ implies that the solution $f=(f_0, f_1)$ of this equation does not belong to ${\mathcal H}$.
More exactly, if the operator ${\mathcal A}_\mu(k^{(i)})$ has a virtual level at $z=27/2$, then the
vector-function $f=(f_0, f_1)$, where
\begin{equation}\label{eigenvector}
f_0={\rm const} \neq 0,\quad f_1(q)=-\frac {\mu
v(q)f_0}{\varepsilon(k^{(i)}+q)+\varepsilon(q)-9} ,
\end{equation}
satisfies the equation ${\mathcal A}_\mu(k^{(i)})f=(27/2) f$ and $f_1\in L_1({\Bbb T}^3)\setminus L_2({\Bbb T}^3)$
(see assertion (i) of Theorem \ref{Main Theorem 1}).

 If the number $z=27/2$ is an eigenvalue of the operator ${\mathcal A}_\mu(k^{(i)})$ then the vector-function $f=(f_0,f_1),$ where $f_0$ and $f_1$ are defined in \eqref{eigenvector}, satisfies the equation ${\mathcal A}_\mu(k^{(i)})f=(27/2) f$ and $f_1\in L_2({\Bbb T}^3)$
(see assertion (ii) of Theorem \ref{Main Theorem 1}).

The same assertions are true for the operator ${\mathcal A}_\mu(\bar{0})$ at the point $z=0.$

Henceforth, we shall denote by $C_1, C_2, C_3$ different positive numbers and
for each $\delta>0,$ the notation $U_\delta(p_0)$ is used for the
$\delta-$neighborhood of the point $p_0\in {\Bbb T}^3:$
$$
U_\delta(p_0):=\{p\in {\Bbb T}^3: |p-p_0|<\delta\}.
$$

Now we formulate the first main result of the paper.

\begin{thm}\label{Main Theorem 1} Let $\gamma<9$ and $i \in \{1,\ldots,8\}$.\\
{\rm (i)} The number $z=27/2$ is an eigenvalue of the operator
${\mathcal A}_\mu(k^{(i)})$ if and only if
$\mu=\mu_r^{(i)}(\gamma)$ and $v(k^{(i)})=0;$\\
{\rm (ii)} The operator ${\mathcal A}_\mu(k^{(i)})$ has a virtual
level at the point $z=27/2$ if and only if
$\mu=\mu_r^{(i)}(\gamma)$ and $v(k^{(i)}) \neq 0.$
\end{thm}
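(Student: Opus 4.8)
The plan is to analyze the eigenvalue equation $\mathcal{A}_\mu(k^{(i)})f = (27/2)f$ directly in components. Writing $f = (f_0, f_1)$, the second row gives $\mu v(p) f_0 + w_1(k^{(i)}, p) f_1(p) = (27/2) f_1(p)$, so that, since $w_1(k^{(i)}, p) = \varepsilon(k^{(i)}) + \varepsilon(k^{(i)}+p) + \varepsilon(p) = 9/2 + \varepsilon(k^{(i)}+p) + \varepsilon(p)$ and $27/2 - 9/2 = 9$, we get formally $f_1(p) = -\mu v(p) f_0 / (\varepsilon(k^{(i)}+p)+\varepsilon(p)-9)$ as in \eqref{eigenvector}. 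The denominator $\varepsilon(k^{(i)}+p)+\varepsilon(p)-9$ vanishes precisely where $w_1(k^{(i)}, \cdot)$ attains its maximum $27/2$, i.e. at $p = k^{(i)}$ (and the genuine solutions of that equation, which by the non-degeneracy statement proved in Section 2 is only $p=k^{(i)}$ within the relevant set, modulo checking all eight max points collapse correctly). The first row gives $w_0(k^{(i)}) f_0 + \mu A_{01} f_1 = (27/2) f_0$, i.e. $(\gamma - 9) f_0 = \mu \int_{\Bbb T^3} v(p) f_1(p)\, dp$. Substituting the formula for $f_1$ yields the scalar consistency condition $\gamma - 9 = -\mu^2 \int_{\Bbb T^3} v^2(p)\, dp / (\varepsilon(k^{(i)}+p)+\varepsilon(p)-9)$, which (using $\gamma < 9$, so both sides are negative) is exactly $\mu = \mu_r^{(i)}(\gamma)$ once one knows the integral converges to the positive finite limit recorded just before the definition of $\mu_r^{(i)}$. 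So the arithmetic identity $\mu = \mu_r^{(i)}(\gamma)$ is forced in both cases (i) and (ii); what distinguishes them is the integrability class of $f_1$.

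Next I would do the integrability analysis near $p = k^{(i)}$. Because $w_1(k^{(i)}, \cdot)$ has a non-degenerate maximum at $k^{(i)}$, the Morse lemma (or a direct Taylor expansion of $\varepsilon(k^{(i)}+p)+\varepsilon(p)$) gives $\varepsilon(k^{(i)}+p)+\varepsilon(p)-9 = -\,|q|^2 \cdot (\text{bounded, bounded below})$ for $q = p - k^{(i)}$ in a neighborhood $U_\delta(k^{(i)})$; hence $|f_1(p)| \asymp |v(p)| / |q|^2$ there. If $v(k^{(i)}) \neq 0$, then $|f_1(p)| \asymp C / |q|^2$ near $k^{(i)}$; in $\Bbb R^3$ the function $|q|^{-2}$ is integrable (as $\int_0^\delta r^{-2} r^2\, dr < \infty$) but $|q|^{-4}$ is not, so $f_1 \in L_1(\Bbb T^3) \setminus L_2(\Bbb T^3)$ — this is the virtual level / resonance case, matching Definition \ref{Definition 3.2.}, and the condition $\varphi(k^{(i)}) \neq 0$ there corresponds to $v(k^{(i)}) \neq 0$ via $\varphi(p) = v(p) f_0$-type identification (the integral operator $G^{(i)}_\mu$ acting on $\psi = v f_1/(\text{stuff})$ reproducing the equation). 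If $v(k^{(i)}) = 0$, then since $v$ is analytic, $v(p) = O(|q|)$ near $k^{(i)}$ (possibly higher order, but at least order one), so $|f_1(p)| \le C/|q|$, and $\int |q|^{-2}\, dq < \infty$ in $\Bbb R^3$ shows $f_1 \in L_2$; combined with $f_0 \neq 0$ this gives a genuine eigenvector, proving (i).

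To close the "only if" directions: in (i), if $z = 27/2$ is an eigenvalue with eigenvector $f = (f_0, f_1) \in \mathcal{H}$, I first argue $f_0 \neq 0$ — otherwise $w_1(k^{(i)}, p) f_1(p) = (27/2) f_1(p)$ a.e. forces $f_1$ supported on the null set where $w_1 = 27/2$, hence $f_1 = 0$, contradiction. Then $f_1$ is given by \eqref{eigenvector}, the scalar condition forces $\mu = \mu_r^{(i)}(\gamma)$, and $f_1 \in L_2$ near $k^{(i)}$ forces $\int_{U_\delta(k^{(i)})} v^2(p)/|q|^4\, dq < \infty$, which (again by analyticity of $v$, expanding $v$ in a Taylor series at $k^{(i)}$ and using that $|q|^{-4}$ times a nonvanishing constant is not integrable in $\Bbb R^3$, while $|q|^{-4} \cdot O(|q|^2) = O(|q|^{-2})$ is) is possible only if $v(k^{(i)}) = 0$. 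For (ii), the virtual level hypothesis via Definition \ref{Definition 3.2.} is equivalent (as already noted in the text right after that definition) to $\mu = \mu_r^{(i)}(\gamma)$; it remains to show this is compatible with $v(k^{(i)}) \neq 0$, which is where the eigenfunction condition $\varphi(k^{(i)}) \neq 0$ enters — an eigenfunction $\varphi$ of $G^{(i)}_\mu$ for eigenvalue $1$ has, by the very formula for $G^{(i)}_\mu$, the form $\varphi(p) = \text{const} \cdot v(p)$, so $\varphi(k^{(i)}) \neq 0 \iff v(k^{(i)}) \neq 0$. The main obstacle I anticipate is the careful local analysis near $k^{(i)}$: establishing the exact quadratic vanishing rate of the denominator uniformly (non-degeneracy of the Hessian of $\varepsilon(k^{(i)}+\cdot)+\varepsilon(\cdot)$ at $k^{(i)}$) and then matching the integrability dichotomy for $|v(p)|^2/|q|^4$ to the vanishing/nonvanishing of $v$ at $k^{(i)}$, using analyticity to rule out pathological intermediate behavior. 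Everything away from $k^{(i)}$ (and the harmless point where only bounded contributions arise) is routine since the integrand is then bounded.
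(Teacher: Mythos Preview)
Your proposal is correct and follows essentially the same approach as the paper: write the eigenvalue equation componentwise to obtain the explicit form \eqref{eigenvector} for $f_1$ and the scalar condition $\Delta_\mu(k^{(i)};27/2)=0$ (i.e.\ $\mu=\mu_r^{(i)}(\gamma)$), then decide between eigenvalue and virtual level by the $L_2$-integrability of $v(p)/(\varepsilon(k^{(i)}+p)+\varepsilon(p)-9)$ near the non-degenerate maximum $p=k^{(i)}$, which hinges on whether $v(k^{(i)})$ vanishes. Your explicit verification that $f_0\neq 0$ (forcing $f_1$ to be supported on a null set otherwise) is a small clarification the paper omits; apart from a harmless sign slip in your display of the first-row equation, the arguments coincide.
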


\begin{proof}
Suppose $\gamma<9$ and $i \in \{1,\ldots,8\}$.

(i) "Only If Part". Let the number $z=27/2$ be an eigenvalue of ${\mathcal A}_\mu(k^{(i)})$ and $f=(f_0, f_1) \in {\mathcal H}$
be an associated eigenvector. Then $f_0$ and $f_1$ are satisfy the system of equations
\begin{align}\label{system of equation}
& (\gamma-9)f_0+\mu \int_{{\Bbb T}^3} v(t)f_1(t)dt=0; \nonumber\\
& \mu v(p)f_0  +(\varepsilon(k^{(i)}+p)+\varepsilon(p)-9)f_1(p)=0.
\end{align}

This implies that $f_0$ and $f_1$ are of the form
\eqref{eigenvector} and the first equation of system \eqref{system
of equation} yields $\Delta_{\mu}(k^{(i)}\,; 27/2)=0$, therefore,
$\mu=\mu_r^{(i)}(\gamma)$.

Now let us show that $f_1 \in L_2({\Bbb T}^3)$ if and only if $v(k^{(i)})=0.$
Indeed, if $v(k^{(i)})=0$ (resp. $v(k^{(i)})\neq 0$), from analyticity of the function $v(\cdot)$ it follows that there exist
$C_1,C_2,C_3>0$, $\theta_i \in {\Bbb N}$ and $\delta>0$ such that
\begin{equation}\label{two-sided estimate for v}
C_1|p-k^{(i)}|^{\theta_i} \leq |v(p)| \leq C_2|p-k^{(i)}|^{\theta_i},\quad p \in U_{\delta}(k^{(i)}),
\end{equation}
respectively
\begin{equation}\label{estimate for v}
|v(p)| \geq C_3,\quad p \in {\Bbb T}^3 \setminus U_{\delta}(k^{(i)}).
\end{equation}

Since the function $\varepsilon(k^{(i)}+p)+\varepsilon(p)$ has an unique non-degenerate maximum at the point $k^{(i)} \in {\Bbb T}^3$
there exist $C_1,C_2,C_3>0$ and $\delta>0$ such that
\begin{equation}\label{two-sided estimate for epsilon}
C_1|p-k^{(i)}|^2 \leq |\varepsilon(k^{(i)}+p)+\varepsilon(p)-9| \leq C_2|p-k^{(i)}|^2,\quad p \in U_{\delta}(k^{(i)}),
\end{equation}
\begin{equation}\label{estimate for epsilon}
|\varepsilon(k^{(i)}+p)+\varepsilon(p)-9| \geq C_3,\quad p \in {\Bbb T}^3 \setminus U_{\delta}(k^{(i)}).
\end{equation}

We have
\begin{align}\label{equality for int of f1}
\int_{{\Bbb T}^3} |f_1(t)|^2dt&=\mu^2 |f_0|^2 \int_{U_{\delta}(k^{(i)})} \frac {v^2(t)dt}{(\varepsilon(k^{(i)}+t)+\varepsilon(t)-9)^2}\nonumber\\
&+\mu^2 |f_0|^2 \int_{{\Bbb T}^3\setminus U_{\delta}(k^{(i)})} \frac {v^2(t)dt}{(\varepsilon(k^{(i)}+t)+\varepsilon(t)-9)^2}.
\end{align}

Let $v(k^{(i)})=0$. Then by \eqref{two-sided estimate for v} and \eqref{two-sided estimate for epsilon}
for the first summand on the right-hand side of \eqref{equality for int of f1} we have
$$
\int_{U_{\delta}(k^{(i)})} \frac
{v^2(t)dt}{(\varepsilon(k^{(i)}+t)+\varepsilon(t)-9)^2} \leq C_1
\int_{U_{\delta}(k^{(i)})} \frac {|t-k^{(i)}|^{2\theta_i}
dt}{|t-k^{(i)}|^4}<+\infty.
$$
It follows from the continuity of $v(\cdot)$ on a compact set
${\Bbb T}^3$ and \eqref{estimate for epsilon} that
$$
\int_{{\Bbb T}^3\setminus U_{\delta}(k^{(i)})} \frac {v^2(t)dt}{(\varepsilon(k^{(i)}+t)+\varepsilon(t)-9)^2} \leq C_1 \int_{{\Bbb T}^3\setminus U_{\delta}(k^{(i)})}dt<+\infty.
$$
So, in this case $f_1 \in L_2({\Bbb T}^3).$

For the case $v(k^{(i)}) \neq 0$ there exsist the numbers $\delta
>0$ and $C_1 >0$ such that $|v(p)| \geq C_1$ for any $p \in U_\delta (k^{(i)})$. Then from \eqref{two-sided estimate for epsilon} we obtain
$$
\int_{{\Bbb T}^3} |f_1(t)|^2dt \geq C_1 \int_{U_{\delta}(k^{(i)})} \frac {dt}{|t-k^{(i)}|^4}=+\infty.
$$

Therefore, $f_1 \in L_2({\Bbb T}^3)$ if and only if $v(k^{(i)})=0$.

"If Part". Suppose that $\mu=\mu_r^{(i)}(\gamma)$ and
$v(k^{(i)})=0.$ It is easy to verify that the vector-function
$f=(f_0,f_1)$ with $f_0$ and $f_1$ defined in \eqref{eigenvector}
satisfies the equation ${\mathcal A}_\mu(k^{(i)})f=(27/2) f$. We
proved above that if $v(k^{(i)})=0$, then $f_1 \in L_2({\Bbb
T}^3)$.

(ii) "Only If Part". Suppose that the operator ${\mathcal A}_\mu(k^{(i)})$ has a virtual level
at $z=27/2.$ Then by Definition \ref{Definition 3.2.} the equation
\begin{equation}\label{def of virtual level}
\varphi(p)=\frac{\mu^2 v(p)}{\gamma-9} \int_{{\Bbb T}^3} \frac{v(t)\varphi(t)dt}
{\varepsilon(k^{(i)}+t)+\varepsilon(t)-9},\quad  \varphi\in C({\Bbb T}^3)
\end{equation}
has a nontrivial solution $\varphi\in C({\Bbb T}^3)$, which satisfies the condition
$\varphi(k^{(i)}) \neq 0.$

This solution is equal to the function $v(p)$ (up to a constant factor)
and hence
$$
\Delta_\mu(k^{(i)}, 27/2)=\gamma-9-\mu^2 \int_{{\Bbb T}^3} \frac{v^2(t)dt}{\varepsilon(k^{(i)}+t)+\varepsilon(t)-9}=0,
$$
that is, $\mu=\mu_r^{(i)}(\gamma).$

"If Part". Let now $\mu=\mu_r^{(i)}(\gamma)$ and $v(k^{(i)}) \neq 0.$
Then the function $v \in C({\Bbb T}^3)$ is a solution of \eqref{def of virtual level}, and consequently, by Definition \ref{Definition 3.2.} the operator
${\mathcal A}_\mu(k^{(i)})$ has a virtual level at $z=27/2.$
\end{proof}

The following result may be proved in much the same way as Theorem \ref{Main Theorem 1}.

\begin{thm}\label{Main Theorem 2} Let $\gamma>0$.\\
{\rm (i)} The operator ${\mathcal A}_\mu(\bar{0})$ has an zero
eigenvalue if and only if  $\mu=\mu_l(\gamma)$ and
$v(\bar{0})=0;$\\
{\rm (ii)} The operator ${\mathcal A}_\mu(\bar{0})$ has a  zero
energy resonance if and only if $\mu=\mu_l(\gamma)$ and
$v(\bar{0}) \neq 0.$
\end{thm}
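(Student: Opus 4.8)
The plan is to transpose the proof of Theorem~\ref{Main Theorem 1} essentially verbatim from the top $z=27/2$ of the essential spectrum of ${\mathcal A}_\mu(k^{(i)})$ to the bottom $z=0$ of the essential spectrum of ${\mathcal A}_\mu(\bar{0})$. First I would record the elementary identities $w_0(\bar{0})=\varepsilon(\bar{0})+\gamma=\gamma$ and $w_1(\bar{0},p)=\varepsilon(\bar{0})+\varepsilon(\bar{0}+p)+\varepsilon(p)=2\varepsilon(p)$, so that the minimum $\min_{p}w_1(\bar{0},p)=0$ is attained precisely at the unique non-degenerate minimum $p=\bar{0}$ of $\varepsilon(\cdot)$, and the Fredholm determinant at the threshold reads
$$
\Delta_\mu(\bar{0}\,;0)=\gamma-\mu^2\int_{{\Bbb T}^3}\frac{v^2(t)\,dt}{2\varepsilon(t)}.
$$
Since $\gamma>0$ and $\int_{{\Bbb T}^3}v^2(t)/\varepsilon(t)\,dt$ is positive and finite, the equation $\Delta_\mu(\bar{0}\,;0)=0$ has the unique positive root $\mu=\mu_l(\gamma)$; this single computation feeds both parts of the theorem.

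For part (i) I would start from the hypothesis that $z=0$ is an eigenvalue of ${\mathcal A}_\mu(\bar{0})$ with eigenvector $f=(f_0,f_1)\in{\mathcal H}$, so that $\gamma f_0+\mu\int_{{\Bbb T}^3}v(t)f_1(t)\,dt=0$ and $\mu v(p)f_0+2\varepsilon(p)f_1(p)=0$. Since $\varepsilon(p)>0$ for $p\neq\bar{0}$, the second relation forces $f_1\equiv0$ whenever $f_0=0$; hence $f_0$ is a nonzero constant and $f_1(p)=-\mu v(p)f_0/(2\varepsilon(p))$, and substituting this into the first relation gives $\Delta_\mu(\bar{0}\,;0)=0$, i.e. $\mu=\mu_l(\gamma)$. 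It then remains to decide when this $f_1$ belongs to $L_2({\Bbb T}^3)$, which I would do as in Theorem~\ref{Main Theorem 1} by splitting $\int_{{\Bbb T}^3}|f_1|^2$ over $U_\delta(\bar{0})$ and its complement: off $U_\delta(\bar{0})$ the function $\varepsilon(\cdot)$ is bounded below and $v(\cdot)$ is continuous, so that contribution is always finite; on $U_\delta(\bar{0})$ I would use the two-sided bound $C_1|p|^2\le\varepsilon(p)\le C_2|p|^2$ from the non-degenerate minimum together with the analyticity dichotomy for $v$ — either $C_1|p|^{\theta}\le|v(p)|\le C_2|p|^{\theta}$ with $\theta\in{\Bbb N}$ when $v(\bar{0})=0$, giving $\int_{U_\delta(\bar{0})}|p|^{2\theta-4}\,dp<+\infty$ in dimension three (because $2\theta-4\ge-2>-3$), or $|v(p)|\ge C_3>0$ when $v(\bar{0})\neq0$, giving $\int_{U_\delta(\bar{0})}|p|^{-4}\,dp=+\infty$. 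Hence $f_1\in L_2({\Bbb T}^3)$ if and only if $v(\bar{0})=0$, which settles the "only if" direction; for "if", when $\mu=\mu_l(\gamma)$ and $v(\bar{0})=0$ one checks directly that $f$ with $f_0={\rm const}\neq0$ and $f_1$ as above solves ${\mathcal A}_\mu(\bar{0})f=0$ and, by the same estimate, $f_1\in L_2({\Bbb T}^3)$.

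For part (ii) I would argue from Definition~\ref{Definition 3.1.}: ${\mathcal A}_\mu(\bar{0})$ has a zero-energy resonance iff $1$ is an eigenvalue of $G_\mu$ with an eigenfunction $\psi\in C({\Bbb T}^3)$ satisfying $\psi(\bar{0})\neq0$. Because $(G_\mu\psi)(p)=\bigl(\frac{\mu^2}{2\gamma}\int_{{\Bbb T}^3}v(t)\psi(t)/\varepsilon(t)\,dt\bigr)v(p)$, every eigenfunction of $G_\mu$ for the eigenvalue $1$ is a scalar multiple of $v$, and inserting $\psi=v$ into $G_\mu\psi=\psi$ gives exactly $\Delta_\mu(\bar{0}\,;0)=0$, i.e. $\mu=\mu_l(\gamma)$, while the side condition $\psi(\bar{0})\neq0$ then amounts to $v(\bar{0})\neq0$. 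Conversely, if $\mu=\mu_l(\gamma)$ and $v(\bar{0})\neq0$, then $\psi=v\in C({\Bbb T}^3)$ is an eigenfunction of $G_\mu$ for the eigenvalue $1$ with $\psi(\bar{0})\neq0$, so the resonance exists; equivalently, with $f_0,f_1$ as in part (i) one has $|f_1(p)|$ comparable to $|p|^{-2}$ near $\bar{0}$, whence $f_1\in L_1({\Bbb T}^3)\setminus L_2({\Bbb T}^3)$. The only genuinely delicate step is the power-counting near $p=\bar{0}$ in part (i): one must make sure that the vanishing order $\theta$ supplied by the analyticity of $v$ is a positive integer and that, combined with the quadratic vanishing of $\varepsilon$ at its non-degenerate minimum, it puts $f_1$ on the correct side of the $L_2$-threshold in three dimensions; everything else is a routine transcription of the argument already carried out for Theorem~\ref{Main Theorem 1}.
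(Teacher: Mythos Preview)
Your proposal is correct and follows exactly the route the paper indicates: the paper does not write out a separate proof of Theorem~\ref{Main Theorem 2} but simply says it ``may be proved in much the same way as Theorem~\ref{Main Theorem 1}'', and what you have written is precisely that transcription to the bottom threshold $z=0$, replacing $k^{(i)}$ by $\bar{0}$, $\varepsilon(k^{(i)}+p)+\varepsilon(p)-9$ by $2\varepsilon(p)$, and the non-degenerate maximum by the non-degenerate minimum. Your explicit check that $f_0\neq0$ and your power-counting $2\theta-4\ge-2>-3$ in dimension three are in fact slightly more detailed than the paper's own argument for Theorem~\ref{Main Theorem 1}.
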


Since $\mu_l(\gamma_i)=\mu_r^{(i)}(\gamma_i),$ setting $\mu_i:=\mu_l(\gamma_i),$ from Theorems \ref{Main Theorem 1} and \ref{Main Theorem 2} we obtain the following

\begin{cor}
Let $\gamma \in (0; 9)$ and $i \in \{1,\ldots,8\}.$\\
{\rm (i)} The operator ${\mathcal A}_\mu(\bar{0})$ has a zero
eigenvalue and
the number $z=27/2$ is an eigenvalue of ${\mathcal A}_\mu(k^{(i)})$ iff $\mu=\mu_i$ and $v(\bar{0})=v(k^{(i)})=0;$\\
{\rm (ii)} The operator ${\mathcal A}_\mu(\bar{0})$ has
zero-energy resonance and the operator ${\mathcal A}_\mu(k^{(i)})$
has a virtual level at the point $z=27/2$
iff $\mu=\mu_i,$ $v(\bar{0}) \neq 0$ and $v(k^{(i)}) \neq 0;$\\
{\rm (iii)} The operator ${\mathcal A}_\mu(\bar{0})$ has a zero
eigenvalue and the operator ${\mathcal A}_\mu(k^{(i)})$ has a
virtual level at the point $z=27/2$
iff $\mu=\mu_i,$ $v(\bar{0}) = 0$ and $v(k^{(i)}) \neq 0;$\\
{\rm (iv)} The operator ${\mathcal A}_\mu(\bar{0})$ has a
zero-energy resonance and the number $z=27/2$ is an eigenvalue of
${\mathcal A}_\mu(k^{(i)})$ iff $\mu=\mu_i,$ $v(\bar{0}) \neq 0$
and $v(k^{(i)}) = 0.$
\end{cor}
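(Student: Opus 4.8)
\medskip

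\noindent\textbf{Sketch of proof.} The plan is to obtain all four equivalences as a direct combination of Theorems \ref{Main Theorem 1} and \ref{Main Theorem 2}; no new spectral analysis is needed, and the only delicate point is to check that the two conditions on the coupling constant coming from the behaviour at $z=0$ and at $z=27/2$ are mutually compatible.

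First I would isolate from Theorem \ref{Main Theorem 2} the following dichotomy (available since $\gamma\in(0,9)$, in particular $\gamma>0$): the operator ${\mathcal A}_\mu(\bar 0)$ exhibits a threshold phenomenon at $z=0$ --- either a zero eigenvalue or a zero-energy resonance --- if and only if $\mu=\mu_l(\gamma)$, and in that situation it is a zero eigenvalue when $v(\bar 0)=0$ and a zero-energy resonance when $v(\bar 0)\neq 0$. Symmetrically, Theorem \ref{Main Theorem 1} (available since $\gamma<9$) gives: ${\mathcal A}_\mu(k^{(i)})$ exhibits a threshold phenomenon at $z=27/2$ if and only if $\mu=\mu_r^{(i)}(\gamma)$, and this is a genuine eigenvalue when $v(k^{(i)})=0$ and a virtual level when $v(k^{(i)})\neq 0$.

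Next I would put the two statements together. Requiring that ${\mathcal A}_\mu(\bar 0)$ have a threshold phenomenon at $z=0$ and, at the same time, that ${\mathcal A}_\mu(k^{(i)})$ have one at $z=27/2$ forces $\mu=\mu_l(\gamma)$ and $\mu=\mu_r^{(i)}(\gamma)$ simultaneously, hence $\mu_l(\gamma)=\mu_r^{(i)}(\gamma)$. By the Remark on the relative size of $\mu_l(\gamma)$ and $\mu_r^{(i)}(\gamma)$, the quantity $\mu_l(\gamma)-\mu_r^{(i)}(\gamma)$ is strictly increasing in $\gamma$ on $(0,9)$ and vanishes only at $\gamma_i$; therefore this common constraint is met exactly when $\gamma=\gamma_i$, in which case the shared value of the coupling constant is $\mu_l(\gamma_i)=\mu_r^{(i)}(\gamma_i)=\mu_i$. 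Conversely, at $\gamma=\gamma_i$ and $\mu=\mu_i$ both operators lie in their respective threshold regimes, so each of the four combined properties reduces to the corresponding pair of conditions on the values $v(\bar 0)$ and $v(k^{(i)})$.

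It then remains to split into the four cases according to the two independent sign alternatives for $v(\bar 0)$ and $v(k^{(i)})$: $v(\bar 0)=v(k^{(i)})=0$ yields (i) (a zero eigenvalue together with the eigenvalue $z=27/2$); $v(\bar 0)\neq 0$, $v(k^{(i)})\neq 0$ yields (ii) (a zero-energy resonance together with a virtual level at $z=27/2$); $v(\bar 0)=0$, $v(k^{(i)})\neq 0$ yields (iii); and $v(\bar 0)\neq 0$, $v(k^{(i)})=0$ yields (iv), in each case invoking the appropriate item of Theorems \ref{Main Theorem 1} and \ref{Main Theorem 2} to identify the precise type of threshold behaviour. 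The only substantive step, as noted, is the compatibility observation in the previous paragraph --- that the two coupling-constant requirements can hold together solely on the critical slice $\gamma=\gamma_i$, with the unique admissible value $\mu=\mu_i$; granting this, the four assertions are immediate consequences of the two main theorems.
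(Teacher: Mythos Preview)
Your approach matches the paper's exactly: the corollary is presented there with no proof beyond the line ``Since $\mu_l(\gamma_i)=\mu_r^{(i)}(\gamma_i)$, setting $\mu_i:=\mu_l(\gamma_i)$, from Theorems~\ref{Main Theorem 1} and~\ref{Main Theorem 2} we obtain the following'', and your sketch simply unpacks that combination. The compatibility check you add---that the simultaneous constraints $\mu=\mu_l(\gamma)$ and $\mu=\mu_r^{(i)}(\gamma)$ single out $\gamma=\gamma_i$---goes a little beyond what the paper spells out, but it is implicit in their setup of $\mu_i$.
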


Next we will consider some applications of the results. Denote by ${\mathcal H}_2:=L_2^{\rm s} (({\Bbb T}^3)^2)$ the
Hilbert space of square integrable (complex) symmetric functions
defined on $({\Bbb T}^3)^2.$ In the Hilbert space ${\mathcal H}_1 \oplus {\mathcal H}_2$ we
consider a $2\times2$ operator matrix
$$
{\mathcal A}_\mu:=\left( \begin{array}{cc}
A_{11} & \sqrt{2} \mu A_{12}\\
\sqrt{2}\mu A_{12}^* & A_{22}\\
\end{array}
\right),
$$
where $A_{ij}: {\mathcal H}_j \to {\mathcal H}_i,$ $i=1,2$ are defined by the rules
\begin{align*}
& (A_{11}f_1)(k)=w_1(k)f_1(k),\quad (A_{12}f_2)(k)=\int_{{\Bbb T}^3} v(t)f_2(k,t)dt,\\
& (A_{22}f_2)(k,p)=w_1(k,p)f_2(k,p)\quad f_i \in {\mathcal H}_i, \quad i=1,2.
\end{align*}
Here $A_{12}^*: {\mathcal H}_1 \to {\mathcal H}_2$ denotes the adjoint operator to $A_{12}$
and
$$
(A_{12}^*f_1)(k,p)=\frac{1}{2} (v(k)f_1(p)+v(p)f_1(k)), \quad f_1 \in {\mathcal H}_1.
$$

Under these assumptions the operator ${\mathcal A}_\mu$ is bounded
and self-adjoint.

The main results of the present paper plays crucial role in the study of the spectral properties of the operator matrix ${\mathcal A}_\mu$.
In particular, the essential spectrum of ${\mathcal A}_\mu$ can be described via the spectrum of ${\mathcal A}_\mu(k)$ the following equality holds
$$
\sigma_{\rm ess}({\mathcal A}_\mu)=[0; 27/2] \cup \bigcup_{k \in {\Bbb T}^3} \sigma_{\rm disc}({\mathcal A}_\mu(k)).
$$
Since the operator ${\mathcal A}_\mu(k)$ has at most 2 simple eigenvalues,
the set $\sigma_{\rm ess}({\mathcal A}_\mu)$ consists at least one and at most three bounded closed intervals,
for similar results see \cite{TR16}.

Using Theorems \ref{Main Theorem 1} and \ref{Main Theorem 2} one can investigate \cite{TR11} the number of eigenvalues of ${\mathcal A}_\mu$
and find its discrete spectrum asymptotics.

We note that the case
$$
v(p)=\sqrt{\mu}=const, \quad w_1
(k,p)=\varepsilon(k)+\varepsilon(\frac{1}{2} (k+p))+\varepsilon(p)
$$
is studied in \cite{RD2019}, and it is shown that the bounds
$\min\limits_{k\in {\Bbb T}^3} \sigma_{ess}({\mathcal
A}_\mu(\bar{0}))$ and $\max\limits_{k\in {\Bbb T}^3}
\sigma_{ess}({\mathcal A}_\mu(\bar{\pi}))$ are only virtual
levels. This paper generalizes the results of the paper
\cite{RD2019} and it is proved that these bounds are threshold
eigenvalues or virtual levels depending on the values of the
function $v(\cdot)$.

\end{document}